\theoremstyle{plain}
\theoremstyle{plain}
\theoremstyle{plain}
\newtheorem{lem}{\protect\lemmaname}
\theoremstyle{plain}
\newtheorem{thm}{\protect\theoremname}
\theoremstyle{plain}
\theoremstyle{definition}
\theoremstyle{definition}
\theoremstyle{definition}
\providecommand{\claimname}{Claim}
\providecommand{\lemmaname}{Lemma}
\providecommand{\propositionname}{Proposition}
\providecommand{\theoremname}{Theorem}
\providecommand{\corollaryname}{Corollary} 
\providecommand{\definitionname}{Definition}
\providecommand{\assumptionname}{Assumption}
\providecommand{\remarkname}{Remark}
\newcommand{\overbar}[1]{\mkern 1.25mu\overline{\mkern-1.25mu#1\mkern-0.25mu}\mkern 0.25mu}
\newcommand{\lammin}{\lambda_{\mathrm{min}}}
\newcommand{\lammax}{\lambda_{\mathrm{max}}}
\newcommand{\missing}{\ast}
\newcommand{\bSigmatil}{\widetilde{\mathbf{\Sigma}}}
\newcommand{\mtil}{\widetilde{m}}
\newcommand{\Ptil}{\widetilde{P}}
\newcommand{\dmin}{d_{\mathrm{min}}}
\newcommand{\davg}{d_{\mathrm{avg}}}
\newcommand{\dmaxbar}{\overline{d}_{\mathrm{max}}}
\newcommand{\taumin}{\tau_{\mathrm{min}}}
\newcommand{\taumax}{\tau_{\mathrm{max}}}
\newcommand{\ntil}{\widetilde{n}}
\newcommand{\mmin}{m_{\mathrm{min}}}
\newcommand{\mmax}{m_{\mathrm{max}}}
\newcommand{\dmax}{d_{\mathrm{max}}}
\newcommand{\Tr}{\mathrm{Tr}}
\newcommand{\pebar}{\overbar{P}_{\mathrm{e}}}
\newcommand{\pe}{P_{\mathrm{e}}}
\newcommand{\Xv}{\mathbf{X}}
\newcommand{\Gc}{\mathcal{G}}
\newcommand{\Tc}{\mathcal{T}}
\newcommand{\EE}{\mathbb{E}}
\newcommand{\PP}{\mathbb{P}}
\newcommand{\RR}{\mathbb{R}}
\newcommand{\Iv}{\mathbf{I}}
\newcommand{\bzero}{\mathbf{0}}
\newcommand{\bSigma}{\mathbf{\Sigma}}
\begin{document}

%
\runningtitle{{Lower Bounds on Active Learning for Graphical Model Selection}}

%

\twocolumn[

\aistatstitle{ Lower Bounds on Active Learning for Graphical Model Selection}
\runningauthor{Jonathan Scarlett and Volkan Cevher}
\runningtitle{{Lower Bounds on Active Learning for Graphical Model Selection}}
%
 
\aistatsauthor{ Jonathan Scarlett and Volkan Cevher }

\aistatsaddress{ 
   Laboratory for Information and Inference Systems (LIONS) \\ 
    \'Ecole Polytechnique F\'ed\'erale de Lausanne (EPFL) \\
    Email: \{jonathan.scarlett, volkan.cevher\}@epfl.ch } 
    
]

\begin{abstract}
    We consider the problem of estimating the underlying graph associated with a Markov random field, with the added twist that the decoding algorithm can iteratively choose which subsets of nodes to sample based on the previous samples, resulting in an active learning setting.  Considering both Ising and Gaussian models, we provide algorithm-independent lower bounds for high-probability recovery within the class of degree-bounded graphs.  Our main results are minimax lower bounds for the active setting that match the best known lower bounds for the passive setting, which in turn are known to be tight in several cases of interest.     Our analysis is based on Fano's inequality, along with novel mutual information bounds for the active learning setting, and the application of restricted graph ensembles.  While we consider ensembles that are similar or identical to those used in the passive setting, we require different analysis techniques, with a key challenge being bounding a mutual information quantity associated with observed subsets of nodes, as opposed to full observations.
\end{abstract}

\section{Introduction}

Graphical models are a widely-used tool for providing compact representations of the conditional independence relations between random variables, and arise in areas such as image processing \cite{Gem84}, statistical physics \cite{Gla63}, computational biology \cite{Dur98}, natural language processing \cite{Man99}, and social network analysis \cite{Was94}.  The problem of \emph{graphical model selection} consists of recovering the graph structure given a number of independent samples from the underlying distribution.  While this problem is NP-hard in general \cite{Chi96}, there exist a variety of methods guaranteeing exact recovery with high probability on \emph{restricted} graph classes, with a particularly common restriction being bounded degree.

Several variations of graphical model selection problems with \emph{active learning} have appeared in the literature.  In this paper, we adopt the formulation given in \cite{Das16}, in which the recovery algorithm may adaptively choose which nodes to sample, based on the previous samples.  The goal is to recover the underlying graph subject to a constraint on the total number of node observations.   As discussed in \cite{Das16}, this variation is of interest in several applications; for example, in sensor networks one may be able to choose which sensors to activate, rather than simultaneously activating every sensor at every time instant.  Only upper bounds were provided in \cite{Das16}, and the problem of finding lower bounds was left as an open problem.


\subsection{Contributions}

In this paper, we complement the work of \cite{Das16} by providing algorithm-independent lower bounds on active learning for graphical model selection.  Our main findings are summarized as follows:
\begin{enumerate}
    \item For both Ising models and Gaussian models, we provide lower bounds that essentially match the best known lower bounds for the passive setting \cite{San12,Wan10}, in terms of the minimax probability of error with respect to the class of bounded-degree graphs.  The passive learning bounds are known to be tight in several cases of interest, and our results show that \emph{active learning does not help significantly in the minimax sense} in such cases. 
    \item We provide a class of Gaussian graphical models where the average degree dictates the lower bounds as opposed to the maximal degree, and where we match upper bounds based on the average degree in \cite{Das16}.  Hence, we identify a graph class where the average degree is provably the fundamental quantity dictating the fundamental limits.  Moreover, we provide a class of Ising models where the maximal degree provably remains the key quantity dictating the performance, hence revealing that one cannot always improve the dependence from the maximal to the average degree.
\end{enumerate}
Our analysis uses a variation of Fano's inequality for the active learning setting, along with novel mutual information bounds proved using analogous techniques to those used in channel coding with noiseless feedback \cite{Cov01}.  We apply the resulting bound to a variety of restricted graph ensembles in which the graphs are difficult to distinguish from each other, with notable examples being (i) isolated edges that are difficult to detect; (ii) cliques with a single edge removed such that the removal is difficult to detect.  While the ensembles that we use are similar or identical to those used in the passive setting, analyzing them in the active setting requires new techniques, particularly for bounding a mutual information quantity associated with partial observations instead of full observations.

\subsection{Related Work} \label{sec:related}

In the same way that feedback often provides little or no gain in the capacity for channel coding \cite{Cov01}, it is often observed that active learning provides little or no gain in the information-theoretic sample complexity of inference and learning problems.  For example, in the compressive sensing problem, it has been shown that the improvement amounts to at most a logarithmic factor \cite{Ari13}.  For the group testing problem, under a broad range of scalings of the sparsity level, not even the constant factors improve \cite{Sca15b,Bal13}.  

On the other hand, active learning is known to strictly improve the sample complexity in several cases of interest \cite{Cas08,Hau11}.  Moreover, it should be noted that even when adaptivity does not help asymptotically in an information-theoretic sense, it can still help in the sense of leading to simpler and less computationally expensive algorithms, and also in improving the non-asymptotic performance \cite{Bal13,Pol11,Hau11}.

Active learning for graphical model selection has been studied in several contexts \cite{Mur01,Ton01,Das16}, the most relevant to ours being that of Dasarathy \emph{et al.} \cite{Das16}.  A general algorithm was proposed therein using abstract subroutines for neighborhood selection and neighborhood verification, and applications to the Gaussian setting revealed cases where the total number of node observations is improved from $O(\dmax p\log p)$ to $O((1+\dmaxbar) p\log p)$.  Here $\dmaxbar$ is the average of the node-wise maximal degree, where the latter is defined as the highest degree among a node and all its neighbors.  This quantity can be significantly smaller than $\dmax$, in which case the improvement in the sample complexity is substantial.

Information-theoretic lower bounds for the passive setting were given in \cite{San12,Bre08,Ana12,Tan13,Sha14,Sca16,Vat11} for the Ising model, and \cite{Wan10,Ana12a,Jog15} for the Gaussian model.  Let $\ntil$ be the sample complexity with respect to the number of $p$-dimensional observations.  The best minimax lower bounds for degree-bounded graphs are summarized as follows for the Ising model \cite{San12}:
\begin{equation}
    \ntil = \Omega\bigg(\max\bigg\{ \frac{\log p}{ \lambda\tanh\lambda }, \frac{e^{\lambda d} \log(pd) }{ \lambda d e^{\lambda} }, d\log\frac{p}{d} \bigg\}\bigg), \label{eq:Ising_prev}
\end{equation}
where $p$ is the number of nodes, $d$ is the maximal degree, and $\lambda$ is the inverse temperature of the Ising model (see Section \ref{sec:PRELIM} for precise definitions).  For the Gaussian model, the best known lower bounds for degree-bounded graphs are \cite{Wan10}
\begin{equation}
    \ntil = \Omega\bigg(\max\bigg\{ \frac{\log p}{ \tau^2 }, \frac{d\log\frac{p}{d}}{\log(1+d\tau)} \bigg\}\bigg), \label{eq:Gaussian_prev}
\end{equation}
where $\tau$ corresponds to the smallest allowed off-diagonal magnitude in the normalized inverse covariance matrix (see Section \ref{sec:PRELIM} for details). 

A wide range of polynomial-time algorithms have been proposed for the passive learning of graphical models; see \cite{Bre08,Ana12,Wu13,Jal11,Ray12,Bre14,Bre14a,Rav10,Vat11} for Ising models, and \cite{Rav11,Mei06,Yan14,Ana12} for Gaussian models.  The best performance bounds among these algorithms match those of \eqref{eq:Ising_prev}--\eqref{eq:Gaussian_prev} in several cases of interest, though there are other cases where gaps remain, or where the results are difficult to compare due to the differences in the underlying assumptions (e.g., additional coherence assumptions).

\subsection{Structure of the Paper}

In Section \ref{sec:PRELIM}, we formally define the Ising and Gaussian graphical models, and formulate the active learning problem.  Our main results are presented and discussed in Section \ref{sec:RESULTS}.  The proofs are given in Section \ref{sec:Fano} (Fano's inequality), Section \ref{sec:PROOF_ISING} (Ising model), and Section \ref{sec:PROOF_GAUSSIAN} (Gaussian model).  In Section \ref{sec:DISCUSSION}, we discuss the role of the average vs.~maximal degree, and we conclude our work in Section \ref{sec:CONCLUSION}.

\section{Active Learning for Graphical Model Selection} \label{sec:PRELIM}

\subsection{Preliminaries}

We consider a collection of $p$ random variables $(X_1,\dotsc,X_p)$ whose joint distribution is encoded by a graphical model $G=(V,E)$ with vertex set $V=\{1,\dotsc,p\}$ and undirected edge set $E$.  The elements of $V$ are referred to as \emph{nodes} or \emph{variables} interchangeably.  We use the standard terminology that the \emph{degree} of a node $i\in V$ is the number of edges in $E$ containing $i$, and that a \emph{clique} is a fully-connected subset of $V$ of cardinality at least two.

We consider two classes of joint probability distributions encoded by $G$, namely, Ising models and Gaussian models.  These are described as follows.

\textbf{Ising Model:}
In the ferromagnetic Ising model \cite{Isi25,Lau96}, each vertex is associated with a binary random variable $X_i \in \{-1,1\}$, and the corresponding joint distribution is described by the probability mass function
\begin{equation}
    P_{G}(x) = \frac{1}{Z}\exp\bigg( \lambda \sum_{(i,j)\in E} x_i x_j \bigg), \label{eq:Ising}
\end{equation}
where $Z$ is a normalizing constant called the partition function. Here $\lambda > 0$ is a parameter to the distribution, sometimes called the inverse temperature.

In the context of Ising model selection, we write $\Gc_{d}$ as $\Gc_{d,\lambda}$ to emphasize that the results depend on $\lambda$.  Although we let $\lambda$ be a constant here, our lower bounds remain valid in the minimax sense when one considers the larger class in which the edges have differing parameters $\{\lambda_{ij}\}$ in the range $[\lammin,\lammax]$, provided that $\lammin \le \lambda \le \lammax$.

\textbf{Gaussian Model:}
In the Gaussian graphical model \cite{Lau96}, each vertex is associated with a random variable $X_i \in \RR$, and the corresponding joint distribution is
\begin{equation}
    (X_1,\dotsc,X_p) \sim N(\bzero,\bSigma), \label{eq:Gaussian_dist}
\end{equation}
where $\bzero$ is the vector of zeros, and $\bSigma$ is a covariance matrix whose inverse $\bSigma^{-1}$ contains non-zeros only in the diagonal entries and the indices corresponding to pairs in $E$.  By the Hammersley-Clifford theorem \cite{Lau96}, this implies the Markov property for the graph, namely, that a given node is conditionally independent of the rest of the graph given its neighbors.  The joint density function corresponding to \eqref{eq:Gaussian_dist} is denoted by $P_G$, overloading the notation used above for the Ising model.

A typical restriction on the entries of $\Theta = \Sigma^{-1}$ is that $\frac{|\Theta_{ij}|}{\sqrt{\Theta_{ii}\Theta_{jj}}}$ is lower bounded by some constant $\tau > 0$ \cite{Wan10,Das16}.  We consider the simplest special case of this in which the lower bound always holds with equality:
\begin{equation}
    \Theta_{ij} = \begin{cases} 1 & i = j \\ \pm\tau & (i,j) \in E \\ 0 & \mathrm{otherwise}. \end{cases} \label{eq:Theta_entries}
\end{equation}
We write $\Gc_{d}$ as $\Gc_{d,\tau}$ to emphasize that the results depend on $\tau$.  Similarly to the Ising model, our lower bounds remain valid in the minimax case when we consider the larger class with $\frac{|\Theta_{ij}|}{\sqrt{\Theta_{ii}\Theta_{jj}}} \in [\taumin,\taumax]$ with $\taumin \le \tau \le \taumax$.  

\begin{figure}
    \begin{centering}
        \includegraphics[width=0.95\columnwidth]{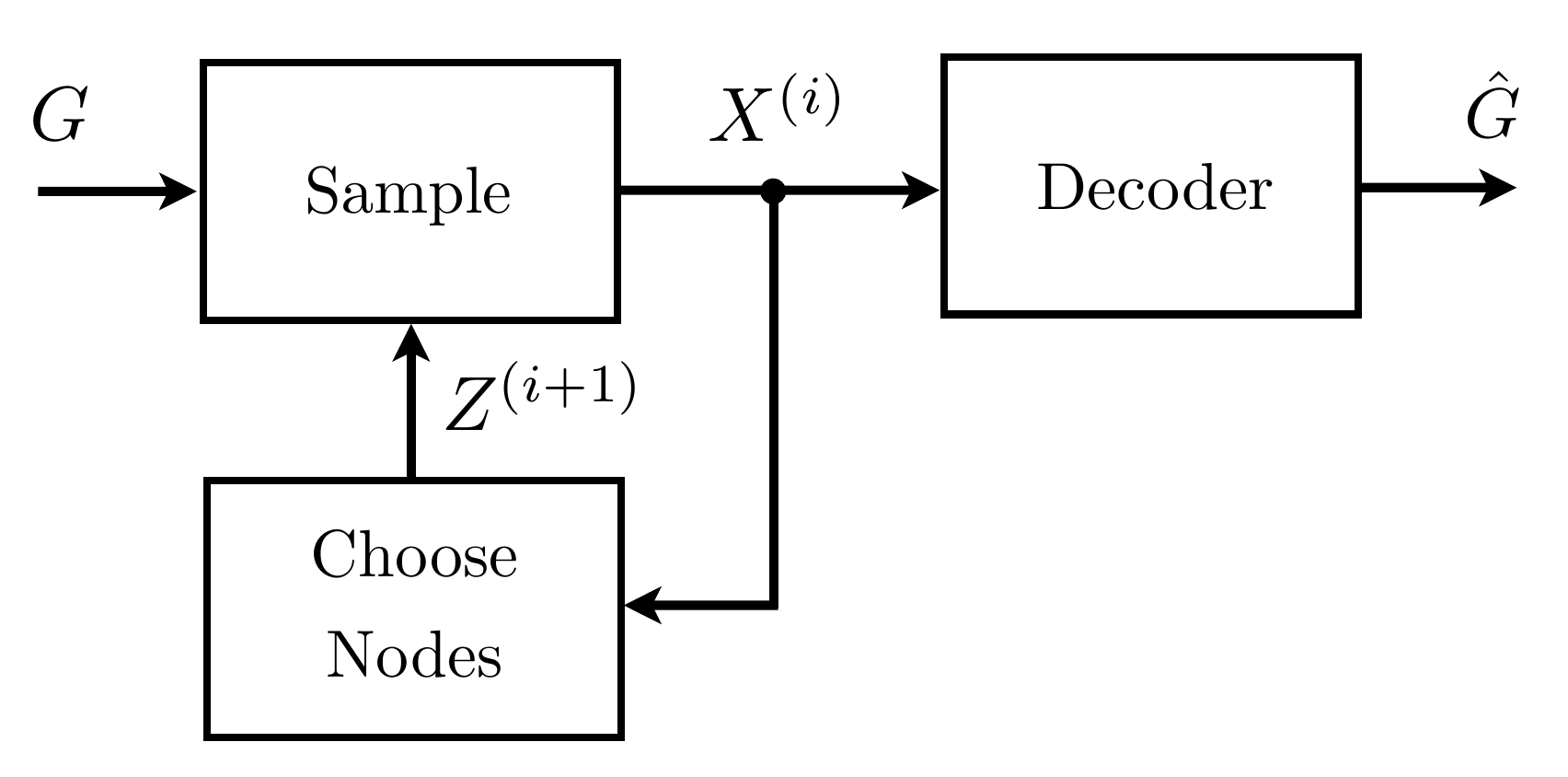}
        \par
    \end{centering}
    
    \caption{Illustration of the active learning problem for graphical model selection. \label{fig:problem_diagram}}
\end{figure}

\subsection{Problem Statement}

The problem of graphical model selection with active learning proceeds in rounds $i=1,2,\dotsc$, as illustrated in Figure \ref{fig:problem_diagram}.  In the $i$-th round, the algorithm selects a subset of $V$ to observe, encoded by a binary vector $Z^{(i)} \in \{0,1\}^p$ equaling one for observed nodes and zero for non-observed nodes.  The resulting \emph{sample} (or \emph{observation}) is a $p$-dimensional vector $X^{(i)}$ such that:
\begin{itemize}
    \item The joint distribution of the entries of $X^{(i)}$, corresponding to the entries where $Z^{(i)}$ is one, coincide with the corresponding joint distribution of the vector $(X_1,\dotsc,X_p) \sim P_G$, with independence between rounds;
    \item The values of the entries of $X^{(i)}$, corresponding to the entries where $Z^{(i)}$ is zero, are deterministically given by $\missing$, a symbol indicating that the node was not observed.
\end{itemize}

For convenience, we let $N$ denote the maximum possible number of active learning rounds (e.g., we can simply set $N=n$), and use the convention that for values of $i$ beyond the actual (possibly random) final round, $X^{(i)} = (\missing,\dotsc,\missing)$.  Letting $|Z^{(i)}|$ denote the number of entries where $Z^{(i)}$ is one, we refer to $\sum_{i=1}^N |Z^{(i)}|$ as the \emph{total number of node observations} used throughout the course of the algorithm, and we impose an upper bound on its maximum allowed value, denoted by $n$.  Note that this differs from the quantity $\ntil$ in \eqref{eq:Ising_prev}--\eqref{eq:Gaussian_prev} by a factor of $p$.

After the final round, the algorithm constructs an estimate $\hat{G}$ of $G$, and the error probability is given by
\begin{equation}
    \pe(G) := \PP[\hat{G} \ne G]. \label{eq:pe_G}
\end{equation}
We consider the class $\Gc_{d}$ of degree-bounded graphs, in which all nodes have degree at most $d$.  Specifically, we are interested in bounds on the minimax (worst-case) error probability for graphs in this class:
\begin{equation}
    \pe := \max_{G \in \Gc_d} \PP[\hat{G} \ne G], \label{eq:pe}
\end{equation}
where the dependence on the total number of node samples $n$ is kept implicit.  Note that when we consider the Gaussian setting, the maximum in \eqref{eq:pe_G} is not only over the graph $G$, but also implicitly over the signs ($+1$ or $-1$) in the second case of \eqref{eq:Theta_entries}.

We are interested in characterizing the {\em sample complexity}, meaning the required number of node observations $n$ needed in order to achieve $\pe \le \delta$ for some target error probability $\delta > 0$.

\section{Main Results} \label{sec:RESULTS}

In this section, we state and discuss our main results, namely, minimax lower bounds on the sample complexity for $\Gc_d$.  We note that the proofs are based on graph ensembles in which the maximal degree and average degree are approximately equal; however, in Section \ref{sec:DISCUSSION}, we discuss variations of these ensembles in which these two notions differ significantly.


\subsection{Ising Model}

\begin{thm} \label{thm:Ising}
    For Ising graphical models with $\lambda d \ge 1$, in order to recover any graph in $\Gc_{d,\lambda}$ with probability at least $1-\delta$, it is necessary that the total number of node observations, $n$, satisfies
    \begin{multline}
        n \ge \max \Bigg\{ \frac{2p\log p}{ \lambda \tanh \lambda }, \frac{e^{\lambda d} \log(pd) }{ 2\lambda d e^{\lambda}},  \frac{pd\log\frac{p}{8d} }{4\log 2}\Bigg\} \\ \times (1-\delta-o(1)). \label{eq:Ising_final}
    \end{multline}
\end{thm}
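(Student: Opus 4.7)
The plan is to prove each of the three lower bounds in \eqref{eq:Ising_final} separately by combining three restricted graph ensembles with an active-learning version of Fano's inequality. The central technical ingredient is a bound on $I(G; X^{(1)},\dotsc,X^{(N)})$ that accounts for the fact that $Z^{(i)}$ is chosen adaptively based on $(X^{(1)},\dotsc,X^{(i-1)})$. By mimicking the argument used for noiseless feedback in channel coding, I would write the chain rule $I(G;X^{(1)},\dotsc,X^{(N)}) \le \sum_{i=1}^{N} I(G;X^{(i)}|X^{(1)},\dotsc,X^{(i-1)})$ and then upper bound each conditional term by $\max_{z} I(G; X_z)$ where the max is over the choice of node subset $z$, with the total effective ``per-round'' cost tied not to $p$ but to $|Z^{(i)}|$, so that summing yields an effective budget of $n$ single-node observations. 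This is precisely what lets the lower bounds scale with $n$ rather than $\ntil = n/p$.

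Next, I would instantiate three ensembles of graphs in $\Gc_{d,\lambda}$, each carrying enough entropy $H(G) \approx \log|\Gc|$ that Fano forces $I(G;X^{(1)},\dotsc,X^{(N)}) \ge (1-\delta)\log|\Gc| - o(1)$. For the first term $\frac{2p\log p}{\lambda\tanh\lambda}$, I would use an ensemble consisting of graphs with a single edge placed among $p$ nodes (essentially the ``isolated edges'' construction used in the passive setting of \cite{San12}); here the per-node mutual information bound at the Ising model behaves as $O(\lambda \tanh\lambda)$, and combined with $\log|\Gc| = \Theta(\log p)$ this yields the claimed bound. For the second term $\frac{e^{\lambda d}\log(pd)}{2\lambda d e^\lambda}$, I would use the ``near-clique'' ensemble in which a $(d{+}1)$-clique has a single edge removed; the difficulty of detecting that removal at large $\lambda d$ translates, via a per-observation mutual information bound of order $\lambda d e^{-\lambda d} e^{\lambda}$, into the stated rate. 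For the third term $\frac{pd\log(p/8d)}{4\log 2}$, I would use the standard cardinality bound $\log|\Gc_{d,\lambda}| = \Omega(pd\log(p/d))$ combined with a worst-case per-node information bound of $\log 2$ (since $X_i\in\{-1,1\}$), yielding the counting-style bound.

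The principal obstacle will be establishing the per-observation mutual information bounds in the active setting, which is where the analysis genuinely differs from the passive case. In the passive setting one works directly with $I(G; X_1,\dotsc,X_p)$ under the full joint $P_G$, where the Ising exponential structure enables clean bounds. Here one must bound $I(G; X_{Z^{(i)}})$ for an \emph{arbitrary} subset $Z^{(i)}$, handling both (a) the way adaptive sampling can focus mass on informative subsets, and (b) the fact that marginalizing an Ising distribution to a subset of nodes does not yield another simple Ising distribution. For the clique-removal ensemble, in particular, I expect the hardest step to be showing that the marginals over a chosen subset still conceal the single missing edge almost as effectively as the full observation does, so that the $e^{-\lambda d}$-type decay survives partial observation. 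Once these per-observation bounds are in place, assembling them with the active-Fano inequality and the three ensemble entropies yields \eqref{eq:Ising_final} term by term.
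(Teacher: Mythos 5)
Your overall architecture --- an active-learning Fano inequality obtained via the chain rule and a feedback-channel-style argument, with the per-round information tied to $|Z^{(i)}|$ so that the total budget of $n$ node observations appears, instantiated on three restricted ensembles --- is exactly the paper's approach, and your third bound (counting all of $\Gc_d$ against a $\log 2$ per-observed-node information bound) matches the paper's. However, there is a genuine gap in your first bound. You propose an ensemble of graphs with a \emph{single} edge among $p$ nodes and note $\log|\Tc|=\Theta(\log p)$; combined with a per-node information cost of $O(\lambda\tanh\lambda)$ and a budget of $n$ node observations, Fano then only gives $n=\Omega\big(\frac{\log p}{\lambda\tanh\lambda}\big)$, which falls short of the claimed $\frac{2p\log p}{\lambda\tanh\lambda}$ by a factor of $p$. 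The paper instead takes $\Tc$ to be the set of perfect matchings, i.e.\ $\lfloor p/2\rfloor$ node-disjoint edges, so that $\log|\Tc|=(p\log p)(1+o(1))$; the analysis still closes because for any observed subset $z$ the marginal $P_{G(z)}$ factorizes over at most $n(z)/2$ disjoint edges plus isolated nodes, whence $D(P_{G(z)}\|Q_{(z)})\le\frac{n(z)}{2}\lambda\tanh\lambda$ against the empty-graph reference $Q$, and summing over rounds (using $\sum_i n(Z^{(i)})\le n$) gives $\frac{n}{2}\lambda\tanh\lambda$. You need both the larger entropy and this linear-in-$n(z)$ divergence bound to obtain the first term.

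A smaller but real discrepancy affects your second bound: a single $(d{+}1)$-clique with one edge removed yields only $\binom{d+1}{2}$ graphs, so $\log|\Tc|\approx 2\log d$ rather than the $\log(pd)$ appearing in the theorem. The paper's Ensemble 2 uses $\lfloor p/(d+1)\rfloor$ disjoint $(d{+}1)$-cliques with a single edge removed from the base graph, giving $\lfloor p/(d+1)\rfloor\binom{d+1}{2}$ graphs and hence $\log|\Tc|=(\log(pd))(1+o(1))$. Also, the ``hardest step'' you flag --- controlling the information revealed by a partial observation of a near-clique, where the marginal is no longer a simple Ising model --- is resolved trivially in the paper by the data-processing inequality for KL divergence: since $P_{G(z)}$ and $P_{G'(z)}$ are marginals of $P_G$ and $P_{G'}$, one has $D(P_{G(z)}\|P_{G'(z)})\le D(P_G\|P_{G'})\le\frac{4\lambda d e^{\lambda}}{e^{\lambda d}}$ uniformly in $z$, and since each informative round observes at least two nodes there are at most $n/2$ rounds, which together give the stated second term.
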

\begin{proof}
    See Section \ref{sec:PROOF_ISING}.
\end{proof}

The second bound in \eqref{eq:Ising_final} reveals that the sample complexity is very large when $\lambda d \to \infty$ at a rate that is not too slow, due to the exponential term $e^{\lambda d}$.  On the other hand, when $\lambda = O\big(\frac{1}{d}\big)$, the first bound gives a sample complexity of $\Omega(d^2 p\log p)$, since $\tanh\lambda = O(\lambda)$ as $\lambda \to 0$.  Finally, in any case, the third bound gives $n = \Omega\big(pd\log\frac{p}{d}\big)$.  These observations coincide with those for the lower bounds on passive learning in \cite{San12} (see \eqref{eq:Ising_prev} with $\ntil = np$), suggesting that active learning does not help much in the minimax sense for $\Gc_{d,\lambda}$.  Note that compared to \cite{San12}, we lose a factor of $p$ in the second bound, but this factor is insignificant compared to $e^{\lambda d}$ provided that $\lambda d \gg \log p$.

\subsection{Gaussian Model}

\begin{thm} \label{thm:Gaussian}
    For Gaussian graphical models with $d=o(p)$, in order to recover any graph in $\Gc_{d,\tau}$ with probability at least $1-\delta$, it is necessary that the total number of node observations, $n$, satisfies
    \begin{multline}
        n \ge \max \Bigg\{ \frac{4 p\log p}{ \log\frac{1}{1-\tau^2} }, \frac{2pd\log \frac{p}{d}}{ \log\Big(1 + \big((d+1)\frac{\tau}{1-\tau}\big)^2\Big) } \Bigg\} \\ \times (1-\delta-o(1)). \label{eq:Gaussian_final}
    \end{multline}
\end{thm}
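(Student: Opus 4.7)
The proof follows the same general template as that of Theorem \ref{thm:Ising}: apply the active-learning form of Fano's inequality from Section \ref{sec:Fano} to two restricted ensembles in $\Gc_{d,\tau}$, one targeting each term of the maximum in \eqref{eq:Gaussian_final}. Fano's inequality reduces the task, for each ensemble, to (i) a combinatorial lower bound on $\log|\Gc_{\mathrm{ens}}|$ and (ii) an upper bound on the mutual information between the random graph $G$ and the entire active-learning transcript $(X^{(1)},\dotsc,X^{(N)})$, in terms of the total node budget $n$. The feedback-capacity-style reduction underlying the active Fano bound in Section \ref{sec:Fano} already absorbs the adaptive choice of $Z^{(i)}$, so the substantive work is the Gaussian KL-divergence analysis driven by the block structure in \eqref{eq:Theta_entries}.

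For the first term, I would take $\Gc_{\mathrm{ens}}$ to be a subfamily of graphs whose edge sets are matchings on the $p$ vertices, giving $\log|\Gc_{\mathrm{ens}}| \ge 2p\log p\,(1-o(1))$ after a suitable normalization. Since an isolated-edge Gaussian model decomposes into independent pairs, the joint density on any observed subset $S$ factors over matched pairs lying inside $S$, and the KL divergence between any two matching configurations restricted to $S$ is a sum of pairwise contributions each at most $\tfrac{1}{2}\log\tfrac{1}{1-\tau^2}$. Summing across the $n$ node observations yields a mutual-information bound of order $\tfrac{n}{2}\log\tfrac{1}{1-\tau^2}$; substituting into Fano produces the first term of \eqref{eq:Gaussian_final}.

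For the second term, I would take $\Gc_{\mathrm{ens}}$ to be built from $(d+1)$-cliques embedded among the $p$ vertices --- either a single clique whose vertex set is uniform over $\binom{p}{d+1}$ choices (giving $\log|\Gc_{\mathrm{ens}}|$ of order $d\log(p/d)$), or a product of order $p/d$ independent such configurations supported on disjoint vertex blocks in order to boost the log-cardinality to order $pd\log(p/d)$ as required. The key computation is the KL divergence between two such configurations when only a subset $S$ of nodes is observed. Using the Schur-complement formula for Gaussian marginals together with the block structure of $\Theta$ in \eqref{eq:Theta_entries}, the per-node contribution is controlled by the top eigenvalue $(d+1)\tau/(1-\tau)$ of the rank-one-type perturbation of the identity induced by adding a clique, yielding a mutual-information bound of order $\tfrac{n}{2p}\log\bigl(1 + \bigl((d+1)\tfrac{\tau}{1-\tau}\bigr)^2\bigr)$; Fano then produces the second term.

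The main technical obstacle is the partial-observation KL bound underlying the second term: unlike the matching case, a clique's marginal distribution on an arbitrary subset $S$ remains globally correlated through the Schur complement of the inverse covariance, so the KL between two clique configurations restricted to $S$ does not decouple into local contributions. Recovering the precise denominator $\log(1 + ((d+1)\tau/(1-\tau))^2)$, rather than a cruder surrogate such as $\log(1+d\tau)$, hinges on identifying $(d+1)\tau/(1-\tau)$ as the top eigenvalue of the relevant rank-one-plus-identity perturbation, and on combining this with a worst-case bound over observation subsets $S$ so that the active learner's adaptive choices cannot improve the bound in the asymptotic scaling.
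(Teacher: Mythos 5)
Your overall strategy coincides with the paper's: apply the active-learning Fano bound of Lemma \ref{lem:Fano} to an isolated-edges (matching) ensemble for the first term and to a disjoint-$(d+1)$-cliques ensemble for the second, with the divergence computed blockwise against a product reference. The first term goes through essentially as you describe (the paper obtains $\log|\Tc| = (p\log p)(1+o(1))$ and a mutual-information sum of $\frac{n}{4}\log\frac{1}{1-\tau^2}$; your constants differ from these but happen to cancel to the same final bound). The second term, however, has a genuine accounting gap. The log-cardinality of the disjoint-cliques ensemble is $\Theta\big(p\log\frac{p}{d}\big)$, not $\Theta\big(pd\log\frac{p}{d}\big)$: each clique contributes $\log\binom{p}{d+1} \approx d\log\frac{p}{d}$ but there are only $p/(d+1)$ of them. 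The factor $d$ in the numerator of \eqref{eq:Gaussian_final} does not come from the cardinality; it comes from the mutual-information bound, which is $\frac{n}{4(d+1)}\log\big(1+\big((d+1)\frac{\tau}{1-\tau}\big)^2\big)$, i.e., normalized by the clique size $d+1$ rather than by $p$ as you write. With your stated figures (cardinality $pd\log\frac{p}{d}$ and information $\frac{n}{2p}\log(\cdots)$), Fano would yield $n = \Omega\big(\frac{p^2 d\log(p/d)}{\log(\cdots)}\big)$, which is off by a factor of $p$ and would contradict known achievability results, so this is not a matter of constants.

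The more substantive missing piece is that you correctly name the partial-observation KL bound as the main technical obstacle but do not resolve it. The paper's resolution is elementary and explicit: for $\mtil$ observed nodes inside one $m$-clique, the marginal divergence to $N(\bzero,\Iv)$ equals $\frac{1}{2}\big(-\log\big(1-\frac{\mtil a}{1+ma}\big) - \frac{\mtil a}{1+ma}\big)$ with $a = \frac{\tau}{1-\tau}$, and for an observation pattern hitting several cliques the total divergence is $\sum_j \beta_j f(\beta_j)$ with $\beta_j = \frac{\mtil_j a}{1+ma}$ and $f(\beta) = \frac{-\log(1-\beta)-\beta}{\beta}$ increasing; maximizing subject to $\sum_j \mtil_j = n(z)$ shows the worst case concentrates observations into whole cliques, giving $D(P_{G(z)} \| Q_{(z)}) \le \frac{n(z)}{4m}\log(1+(ma)^2)$. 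It is this extremal argument over observation patterns---not a Schur-complement eigenvalue computation---that simultaneously produces the $\frac{1}{d+1}$ normalization and neutralizes the adaptive learner's freedom in choosing $z$; without it your sketch has no mechanism for converting a per-clique divergence into a per-node-observation budget. A minor further point: Lemma \ref{lem:Fano} consumes $D(P_{G(z)} \| Q_{(z)})$ for a single common reference $Q$ (here $N(\bzero,\Iv)$), whereas you repeatedly invoke the KL divergence between two graph configurations; the pairwise version can be made to work but is not what the stated lemma requires.
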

\begin{proof}
    See Section \ref{sec:PROOF_GAUSSIAN}.
\end{proof}

When $\tau = o(1)$, the first bound behaves as $\Omega\big( \frac{1}{\tau^2} p \log p \big)$, whereas when $\tau$ is a constant, the second bound behaves as $\Omega\big( \frac{1}{\log d} \cdot pd\log\frac{p}{d} \big)$.  Both of these scaling laws are identical to the necessary conditions for passive learning in \cite{Wan10} (see \eqref{eq:Gaussian_prev} with $\ntil = np$), again suggesting that active learning does not help much in the minimax sense for $\Gc_{d,\tau}$. 

While the above findings indicate that active learning does not help much in the minimax sense for $\Gc_d$, we discuss a more restricted class of graphs in Section \ref{sec:DISCUSSION} for which active learning helps when $\tau$ is a constant.  Specifically, similarly to the upper bound in \cite{Das16}, the linear dependence on the maximal degree $d$ in the second term of \eqref{eq:Gaussian_final} is improved to the \emph{average} degree.
 
 \section{Proofs of Main Results}
 
\subsection{Fano's Inequality for Active Learning} \label{sec:Fano}

We first apply Fano's inequality \cite{Cov01} along with a novel mutual information bound for active learning in graphical model selection.  The proof bears some resemblance to that of the converse bound for channel coding with noiseless feedback \cite[Sec.~7.12]{Cov01}.  

For $z \in \{0,1\}^p$, we let $G(z)$ denote the subgraph of $G$ obtained by keeping only the nodes corresponding to entries where $z$ equals one, and denote the resulting joint distribution by $P_{G(z)}$.  More generally, for a joint distribution $Q$ on $p$ random variables labeled $\{1,\cdots,p\}$, we let $Q_{(z)}$ denote the joint marginal distribution corresponding to the entries where $z$ is one.

In the following lemma, we let $G$ be uniformly random on some subset of $\Gc_d$, and define the \emph{average} error probability
\begin{equation}
    \pebar := \PP[ \hat{G} \ne G ] = \EE[\pe(G)],
\end{equation}
where in contrast with \eqref{eq:pe}, the probability is now additionally over $G$.  Clearly any lower bound on the sample complexity for achieving $\pebar \le \delta$ implies the same lower bound for achieving $\pe \le \delta$, since $\pe$ is defined with respect to the worst case.  

\begin{lem} \label{lem:Fano}
    Let $G$ be uniform over a restricted graph class $\Tc \subseteq \Gc_d$.  In order to achieve $\pebar \le \delta$, it is necessary that
    \begin{equation}
        1 \ge \frac{\log|\Tc|}{ \sum_{i=1}^N I(G; X^{(i)} | Z^{(i)}) } \bigg(1-\delta - \frac{\log 2}{\log|\Tc|}\bigg), \label{eq:Fano}
    \end{equation}
    where $N$ is the maximum possible number of active learning rounds.  Moreover, if there exists a $p$-dimensional joint distribution $Q$ such that $D(P_{G(z)} \| Q_{(z)}) \le \epsilon(z)$ for all $G \in \Tc$ and $z \in \{0,1\}^p$, where $\epsilon(z)$ is some non-negative function, then we have
    \begin{equation}
         I(G; X^{(i)} | Z^{(i)}) \le \EE\big[ \epsilon(Z^{(i)}) \big] \label{eq:mi_bound}
    \end{equation}
    for all $i$.
\end{lem}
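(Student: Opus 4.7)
The plan is to obtain \eqref{eq:Fano} via Fano's inequality combined with a chain-rule / data-processing argument analogous to the treatment of channel coding with noiseless feedback \cite{Cov01}, and then to obtain \eqref{eq:mi_bound} by exploiting the variational characterization of conditional mutual information as an average KL divergence minimized by the mixture distribution.

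For \eqref{eq:Fano}, I would start from $H(G) = \log|\Tc|$ (since $G$ is uniform on $\Tc$) and combine this with the standard Fano bound $H(G \mid \hat G) \le \log 2 + \pebar \log|\Tc|$ to obtain $(1-\pebar)\log|\Tc| - \log 2 \le I(G; \hat G)$. The substantive step is then upper-bounding $I(G;\hat G)$ by $\sum_{i=1}^N I(G; X^{(i)} \mid Z^{(i)})$. Writing $H^{(i-1)} = (X^{(1)}, Z^{(1)}, \dotsc, X^{(i-1)}, Z^{(i-1)})$ for the history before round $i$, data processing gives $I(G;\hat G) \le I(G; H^{(N)})$, and the chain rule expands the right-hand side as $\sum_i I\bigl(G; X^{(i)}, Z^{(i)} \,\bigm|\, H^{(i-1)}\bigr)$. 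Since $Z^{(i)}$ is a (possibly randomized) function of $H^{(i-1)}$, we have $I(G; Z^{(i)} \mid H^{(i-1)}) = 0$. Because samples from different rounds are independent given $(G, Z^{(i)})$, the round-$i$ sample satisfies $H(X^{(i)} \mid G, H^{(i-1)}, Z^{(i)}) = H(X^{(i)} \mid G, Z^{(i)})$, while "conditioning reduces entropy" gives $H(X^{(i)} \mid H^{(i-1)}, Z^{(i)}) \le H(X^{(i)} \mid Z^{(i)})$. Subtracting yields $I(G; X^{(i)} \mid H^{(i-1)}, Z^{(i)}) \le I(G; X^{(i)} \mid Z^{(i)})$. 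Combining with the Fano step and using $\pebar \le \delta$ produces \eqref{eq:Fano}.

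For \eqref{eq:mi_bound}, observe that conditioned on $Z^{(i)} = z$, the non-$\missing$ coordinates of $X^{(i)}$ are distributed as $P_{G(z)}$, so $I(G; X^{(i)} \mid Z^{(i)} = z) = \EE_G\bigl[D(P_{G(z)} \,\|\, M_z)\bigr]$, where $M_z := \EE_G[P_{G(z)}]$ is the induced mixture. A short calculation gives the identity
\[
\EE_G[D(P_{G(z)} \,\|\, Q_{(z)})] - \EE_G[D(P_{G(z)} \,\|\, M_z)] = D(M_z \,\|\, Q_{(z)}) \ge 0,
\]
showing that the mixture minimizes the average KL divergence, so the hypothesis $D(P_{G(z)} \,\|\, Q_{(z)}) \le \epsilon(z)$ for all $G \in \Tc$ yields $I(G; X^{(i)} \mid Z^{(i)} = z) \le \epsilon(z)$. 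Averaging over $Z^{(i)}$ gives \eqref{eq:mi_bound}.

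The main obstacle is the feedback step: justifying the transition from a mutual information conditioned on the \emph{entire history} $H^{(i-1)}$ to one conditioned only on the current sampling pattern $Z^{(i)}$. This rests on two ingredients that must be combined carefully---that $Z^{(i)}$ carries no new information about $G$ beyond $H^{(i-1)}$, and that $X^{(i)}$ is conditionally independent of $H^{(i-1)}$ given $(G, Z^{(i)})$ thanks to inter-round independence of the samples. Once these are in place, the remainder of the argument reduces to standard Fano bookkeeping and the well-known KL-barycenter property of mixture distributions.
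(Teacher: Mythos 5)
Your proposal is correct and follows essentially the same route as the paper: the same Fano step, the same chain-rule/feedback decomposition using that $Z^{(i)}$ is determined by the history and that $X^{(i)}$ is conditionally independent of the past given $(G,Z^{(i)})$, and the same divergence-to-mutual-information step for \eqref{eq:mi_bound}, which the paper imports from the proof of Corollary~2 of \cite{Sha14} and you rederive explicitly via the KL-barycenter identity.
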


The proof is given in the supplementary material.  The high-level steps are as follows: (i) Bound the error probability in terms of $I(G;\Xv)$ using Fano's inequality; (ii) Use the chain rule to write $I(G;\Xv) = \sum_{i=1}^N I(X^{(i)}; G \,|\, X^{(1)},\dotsc,X^{(i-1)})$; (iii) Upper bound the summands via analogous steps to the proof of the channel coding theorem with feedback \cite[Sec.~7.12]{Cov01}; (iv) Relate the divergence $D(P_{G(z)} \| Q_{(z)})$ to $I(G; X^{(i)} | Z^{(i)})$ using similar steps to \cite{Sha14}.

\subsection{Proof of Theorem \ref{thm:Ising} (Ising model)} \label{sec:PROOF_ISING}

\subsubsection{First Bound for the Ising Model} \label{sec:IsingPf1}

We use the following ensemble in which every node has degree one.

\smallskip
\noindent\fbox{
    \parbox{0.95\columnwidth}{
        \textbf{Ensemble1 [Isolated edges ensemble]}
        \begin{itemize}
            \item Each graph in $\Tc$ consists of $\lfloor p/2 \rfloor$ node-disjoint edges that may otherwise be arbitrary.
        \end{itemize}
    }
} \bigskip

The total number of graphs is $|\Tc| = {p \choose 2}{p-2 \choose 2}\dotsc{4 \choose 2}{2 \choose 2}$ (or similarly when $p$ is an odd number), which is lower bounded by ${\lfloor p/2 \rfloor \choose 2}^{\lfloor p/2 \rfloor}$, yielding
\begin{equation}
    \log |\Tc| \ge \Big\lfloor\frac{p}{2}\Big\rfloor \log{\lfloor p/2 \rfloor \choose 2} = (p \log p)(1+o(1)). \label{eq:IsingCardinality1}
\end{equation}
To obtain a mutual information bound of the form \eqref{eq:mi_bound}, we choose $Q = P_{G'}$ with $G'$ being the empty graph, and note that for a fixed $z \in \{0,1\}^p$ containing $n(z)$ ones, $G(z)$ consists of at most $n(z)/2$ node-disjoint edges.  Since the divergence corresponding to graphs differing in a single edge is upper bounded by $\lambda \tanh \lambda$ \cite{Sha14}, and since the divergence is additive for independent products, we obtain $D(P_{G(z)} \| P_{G'(z)}) \le \frac{n(z)}{2} \lambda\tanh\lambda$, and hence \eqref{eq:mi_bound} becomes
\begin{equation}
    I(G;X^{(i)}|Z^{(i)}) \le \frac{1}{2} \EE[ n(Z^{(i)}) ]  \lambda\tanh\lambda.
\end{equation}
Summing over $i$ and noting that $\sum_{i=1}^N n(Z^{(i)}) \le n$ with probability one, since the algorithm can only use up to $n$ node observations, we obtain
\begin{equation}
    \sum_{i=1}^N I(G; X^{(i)} | Z^{(i)}) \le \frac{n}{2}\lambda\tanh\lambda.
\end{equation}
Substitution into \eqref{eq:Fano} yields the necessary condition
\begin{equation}
    n \ge \frac{2p\log p}{ \lambda \tanh \lambda } (1-\delta-o(1)), \label{eq:IsingBound1}
\end{equation}
where the numerator arises from \eqref{eq:IsingCardinality1}.

\subsubsection{Second Bound for the Ising Model} \label{sec:IsingPf2}

We use the following ensemble from \cite{San12}.

\smallskip
\noindent\fbox{
    \parbox{0.95\columnwidth}{
        \textbf{Ensemble2($m$) [Clique-minus-one ensemble]:}
        \begin{itemize}
            \item Form $\lfloor \frac{p}{m} \rfloor$ arbitrary node-disjoint cliques containing $m$ nodes each, to form a base graph $G'$.
            \item Each graph in $\Tc$ is obtained by removing a single edge from $G'$.
        \end{itemize}
    }
} \bigskip

We choose $m=d+1$, so that the maximal degree is $d$.  The total number of graphs is $\lfloor \frac{p}{m} \rfloor {m \choose 2}$, which yields
\begin{equation}
    \log |\Tc| = (\log(pd))(1+o(1)). \label{eq:IsingCardinality2}
\end{equation}
We obtain a bound of the form \eqref{eq:mi_bound} by choosing $Q = P_{G'}$ with $G'$ as in the ensemble definition.  The divergence associated with the full graphs satisfies $D(P_{G}\|P_{G'}) \le \frac{4\lambda d e^{\lambda}}{ e^{\lambda d} }$ when $\lambda d \ge 1$ \cite[Lemma 2]{San12}.  Since $G(z)$ and $G'(z)$ are common subgraphs of $G$ and $G'$, we trivially have $D(P_{G(z)}\|P_{G'(z)}) \le D(P_{G}\|P_{G'})$, and hence $D(P_{G(z)}\|P_{G'(z)})$ satisfies the same upper bound as $D(P_{G}\|P_{G'})$ regardless of $z$.  Hence, \eqref{eq:mi_bound} yields
\begin{equation}
    I(G;X^{(i)}|Z^{(i)}) \le \frac{4\lambda d e^{\lambda}}{ e^{\lambda d} }. 
\end{equation}
Since the node observation budget is $n$, the active learning can be done in at most $n/2$ rounds without loss of optimality (i.e., excluding trivial cases where only one node is observed), and we have
\begin{equation}
    \sum_{i=1}^N I(G;X^{(i)}|Z^{(i)}) \le \frac{2n\lambda d e^{\lambda}}{ e^{\lambda d} }.
\end{equation}
Substitution into \eqref{eq:Fano} yields the necessary condition
\begin{equation}
    n \ge \frac{e^{\lambda d} \log(pd) }{ 2 \lambda d e^{\lambda}} (1-\delta-o(1)), \label{eq:n_final_Ens2}
\end{equation}
where the numerator arises from \eqref{eq:IsingCardinality2}.

\subsubsection{Third Bound for the Ising Model} \label{sec:IsingPf3}

We use the following straightforward ensemble, which was also used in \cite{San12}.

\smallskip
\noindent\fbox{
    \parbox{0.95\columnwidth}{
        \textbf{Ensemble3 [Complete ensemble]:}
        \begin{itemize}
            \item $\Tc$ contains all graphs with maximal degree at most $d$, i.e., $\Tc = \Gc_{d}$.
        \end{itemize}
    }
} \bigskip

It was shown in \cite{San12} that $\log|\Tc| \ge \frac{dp}{4} \log\frac{p}{8d}$.  To bound the mutual information in \eqref{eq:Fano}, we note that the following holds when $z^{(i)}$ contains $n(z^{(i)})$ ones, and hence $n(z^{(i)})$ nodes are observed in the $i$-th round:
\begin{equation}
    I(G; X^{(i)} | Z^{(i)} = z^{(i)}) \le n(z^{(i)}) \log 2.  \label{eq:IsingBound3}
\end{equation}
This is because the remaining $p - n(z^{(i)})$ nodes are deterministically equal to $\missing$, whereas the $n(z^{(i)})$ nodes are binary and hence reveal at most $\log 2$ bits of information each.  Summing \eqref{eq:IsingBound3} over $i$ and averaging over $Z^{(i)}$, we obtain
\begin{equation}
    \sum_{i=1}^N I(G;X^{(i)}|Z^{(i)}) \le n \log 2,
\end{equation}
and substitution into \eqref{eq:Fano} yields the desired result.

\subsection{Proof of Theorem \ref{thm:Gaussian} (Gaussian model)} \label{sec:PROOF_GAUSSIAN}

\subsubsection{First Bound for the Gaussian Model} \label{sec:GaussianPf1}

We re-use Ensemble 1 above and apply the same analysis, with the only difference being the bounding of the divergence $D(P_{G_1} \| P_{G_0})$ when $G_1$ contains one edge and $G_0$ contains no edges.

When an edge is present, we let the resulting $2 \times 2$ covariance matrix and its inverse be given by
\begin{align} 
    \bSigma_{1} & = (1-\tau^2) \left[\begin{array}{ccc}
        1 & \tau  \\
        \tau & 1  \\
    \end{array}\right], \quad
    \bSigma_{1}^{-1} = \left[\begin{array}{ccc}
        1 & -\tau \\
        -\tau & 1 \\
    \end{array}\right],
\end{align}
whereas for the graph without the edge we simply have $\bSigma_0 = \bSigma_0^{-1} = \Iv$.  Both of these choices are clearly consistent with \eqref{eq:Theta_entries}.

The divergence between two zero-mean Gaussian vectors of dimension $k$ is
\begin{equation}
    D(P_1 \| P_0) = \frac{1}{2}\bigg( \Tr(\bSigma_0^{-1}\bSigma_1) - k + \log\frac{\det \bSigma_0}{ \det\bSigma_1 } \bigg), \label{eq:gaussian_div}
\end{equation}
and with the above covariance matrices and $k=2$, this simplifies to
\begin{equation}
    D(P_1 \| P_0) = \frac{1}{2}\log\frac{1}{1-\tau^2}.
\end{equation}
Hence, in analogy with \eqref{eq:IsingBound1}, we obtain
\begin{equation}
    n \ge \frac{4 p\log p}{ \log\frac{1}{1-\tau^2} } (1-\delta-o(1)). \label{eq:GaussianBound1}
\end{equation}

\subsubsection{Second Bound for the Gaussian Model} \label{sec:GaussianPf2}

We make use of the following ensemble that is similar to one in \cite{Wan10}, but with multiple cliques as opposed to only a single one.  It can also be thought of as a generalization of Ensemble 1, which corresponds to $m = 2$.

\smallskip
\noindent\fbox{
    \parbox{0.95\columnwidth}{
        \textbf{Ensemble4($m$) [Disjoint cliques ensemble]:}
        \begin{itemize}
            \item Each graph in $\Tc$ consists of $\lfloor \frac{p}{m} \rfloor$ disjoint cliques of $m$ nodes that may otherwise be arbitrary.
        \end{itemize}
    }
} \bigskip

The total number of graphs is ${p \choose m}{p-m \choose m}\dotsc{2m \choose m}{m \choose m}$ (or analogously when $p$ does not divide $m$), which is lower bounded by ${\lfloor p/2 \rfloor \choose m}^{\frac{1}{2}\lfloor \frac{p}{m} \rfloor}$, yielding
\begin{equation}
    \log |\Tc| \ge \frac{1}{2}\Big\lfloor \frac{p}{m} \Big\rfloor\log{ \lfloor p/2 \rfloor \choose m} = \bigg(\frac{p}{2}\log\frac{p}{m}\bigg)(1+o(1))
\end{equation}
assuming that $m = o(p)$ and hence $\log{ \lfloor p/2 \rfloor \choose m} = \big(m\log\frac{p}{m}\big)(1+o(1))$.  We choose $m = d+1$ so that the maximal degree is $d$, yielding
\begin{equation}
    \log |\Tc| \ge  \bigg(\frac{p}{2}\log\frac{p}{d}\bigg)(1+o(1)). \label{eq:GaussCardinality2}
\end{equation}

As in \cite{Wan10}, we let the inverse covariance matrix associated with a single clique be given by
\begin{equation}
        \bSigma_{1}^{-1} = \left[\begin{array}{ccccc}
                1+a & a & \cdots & a  \\
                a & 1+ a & \cdots & a \\
                \vdots & \vdots & \ddots & \vdots  \\
                a & a & \cdots & 1+a \\
        \end{array}\right],
\end{equation} 
for $a > 0 $, yielding a covariance matrix given by
\begin{align} 
        &\bSigma_{1} = \frac{1}{1+ma} \nonumber \\ 
        &\hspace*{-1ex}\times \hspace*{-1ex} \left[\begin{array}{ccccc}
            1+(m-1)a & -a & \cdots & -a  \\
            -a & 1+ (m-1)a & \cdots & -a \\ 
            \vdots & \vdots & \ddots & \vdots  \\
            -a & -a & \cdots & 1+(m-1)a \\
        \end{array}\right]
\end{align} 
We set $a = \frac{\tau}{1 - \tau}$ to ensure that the ratio of off-diagonals to diagonals in $\bSigma_1^{-1}$ is $\tau$, in accordance with \eqref{eq:Theta_entries}.  Note that this form of the inverse covariance matrix is slightly different to that in \eqref{eq:Theta_entries}, but the difference only amounts to scaling all observations by a factor of $\sqrt{1+a}$, and hence the recovery problem is unchanged regardless of which form is assumed. 

To obtain a bound of the form \eqref{eq:mi_bound}, we let $Q$ be jointly Gaussian with mean zero and identity covariance matrix, defining $\bSigma_0 = \bSigma_0^{-1} = \Iv$ accordingly.  We first study the behavior of the divergence $D(P_{G(z)} \| Q_{(z)})$ when all of the non-zero values of $z$ correspond to nodes within a single clique in $G$.  Hence, $z$ contains $\mtil \in \{1,\dotsc,m\}$ non-zero entries.

Letting $\bSigmatil_1$ denote an arbitrary sub-matrix of $\bSigma_1$ corresponding to $\mtil \in \{1,\dotsc,m\}$ nodes, a straightforward computation gives
\begin{align}
    \det\bSigmatil_1 &= \frac{1+(m-\mtil)a}{1+ma} = 1 - \frac{\mtil a}{1 + ma} \\
    \Tr(\bSigmatil_1) &= \mtil \frac{1+(m-1)a}{1+ma} = \mtil \bigg(1 - \frac{a}{1+ma} \bigg).
\end{align}
Defining $\bSigmatil_{0}$ analogously simply gives $\bSigmatil_{0} = \bSigmatil_{0}^{-1} = \Iv$, and hence \eqref{eq:gaussian_div} with $k = \mtil$ gives
\begin{equation}
    D(\Ptil_1 \| \Ptil_0) = \frac{1}{2}\Bigg(-\log\bigg(1 - \frac{\mtil a}{1 + ma}\bigg) - \frac{\mtil a}{1 + ma}\Bigg)
\end{equation}
for $\Ptil_0 \sim N(\bzero,\bSigmatil_0)$ and $\Ptil_1 \sim N(\bzero,\bSigmatil_1)$.

Suppose now that a single measurement consists of $n(z)$ nodes indexed by $z \in \{0,1\}^p$.  For a fixed graph $G \in \Tc$, this amounts to observing $\mtil_j$ nodes from each clique $j=1,\dotsc,\lfloor\frac{p}{m}\rfloor$, for some integers $\{\mtil_j\}$ such that $\sum_{j=1}^{\lfloor\frac{p}{m}\rfloor} \mtil_j = n(z)$. Since the divergence is additive for independent products, we obtain
\begin{multline}
    D(P_{G(z)} \| Q_{(z)} ) = \frac{1}{2}\Bigg(\sum_{j=1}^{\lfloor\frac{p}{m}\rfloor} -\log\bigg(1 - \frac{\mtil_j a}{1 + ma}\bigg) \\ - \frac{\mtil_j a}{1 + ma}\Bigg). \label{eq:div_guass1}
\end{multline}
To simplify the subsequent exposition, we write the summation as
\begin{equation}
    \sum_{j=1}^{\lfloor\frac{p}{m} \rfloor} \beta_j f(\beta_j), \label{eq:alpha_sum}
\end{equation}
where $\beta_j = \frac{\mtil_j a}{1+ma}$ and $f(\beta) = \frac{-\log(1-\beta) - \beta}{\beta}$.  We consider the maximization of \eqref{eq:alpha_sum} subject to $0 \le \beta_j \le \frac{ma}{1+ma}$ and $\sum_{j} \beta_j = \frac{n(z)a}{1+ma}$, where these constraints follow immediately from $0 \le \mtil_j \le m$ and $\sum_{j} \mtil_j = n(z)$.  It is easy to verify that the function $f(\beta)$ is increasing in $\beta$, and therefore, the maximal value of \eqref{eq:alpha_sum} is obtained by setting as many values of $\beta_j$ as possible to the maximum value $\frac{ma}{1+ma}$, and letting an additional value of $\beta_j$ equal the remainder (if any).  This amounts to setting as many values of $\mtil_j$ as possible to $m$, and letting an additional value of $\mtil_j$ equal the remainder.  The corresponding maximum value is
\begin{align}
    \sum_{j=1}^{\lfloor\frac{p}{m} \rfloor} \beta_j f(\beta_j) 
        &= \Big\lfloor \frac{n(z)}{m} \Big\rfloor \frac{ma}{1+ma} f\Big( \frac{ma}{1+ma} \Big) \nonumber \\
        & \qquad\qquad\quad + \frac{ra}{1+ma} f\Big( \frac{ra}{1+ma} \Big) \\
        &\le \frac{n(z)}{m} \frac{ma}{1+ma} f\Big( \frac{ma}{1+ma} \Big), \label{eq:sum_bound}
\end{align}
where $r$ denotes the remainder value (i.e., the additional value of $\mtil_j$ mentioned above), and \eqref{eq:sum_bound} follows by writing $f\big( \frac{ra}{1+ma} \big) \le f\big( \frac{ma}{1+ma} \big)$ using the above-mentioned monotonicity of $f$.

Roughly speaking, we have argued that given a budget of $n(z)$ nodes to observe, the ones that yield a graph that is ``furthest'' from the empty graph are those that correspond to $\lfloor \frac{n(z)}{m} \rfloor$ complete $m$-cliques, with any remainder also concentrated within a single clique.  Intuitively, this is because taking measurements from a variety of different cliques yields more independent nodes, thus being closer to the behavior of the empty graph in which all nodes are independent.

Upper bounding the summation on the right-hand side of \eqref{eq:div_guass1} by the maximum value \eqref{eq:sum_bound}, we obtain
\begin{multline}
    D(P_{G(z)} \| Q_{(z)} ) \\ \le \frac{n(z)}{2m} \Bigg( -\log\bigg(1 - \frac{m  a}{1 + ma}\bigg) - \frac{m a}{1 + ma}\Bigg). \label{eq:div_gauss2}
\end{multline}
Applying the inequality $-\log(1-\frac{\beta}{1+\beta}) - \frac{\beta}{1+\beta} \le \frac{1}{2}\log(1+\beta^2)$, we can weaken \eqref{eq:div_gauss2} to
\begin{equation}
    D(P_{G(z)} \| Q_{(z)} ) \le \frac{n(z)}{4m} \log\big(1 + (ma)^2\big). \label{eq:div_gauss2a}
\end{equation}
We obtain from \eqref{eq:div_gauss2a} and \eqref{eq:mi_bound} that 
\begin{equation}
    I(G;X^{(i)}|Z^{(i)}) \le \frac{\EE[ n(Z^{(i)}) ]}{4m} \log\big(1 + (ma)^2\big),
\end{equation}
and summing over $i$ and again noting that $\sum_{i=1}^N n(Z^{(i)}) \le n$ with probability one, we obtain
\begin{equation}
    \sum_{i=1}^N I(G; X^{(i)} | Z^{(i)}) \le \frac{n}{4m} \log\big(1 + (ma)^2\big). \label{eq:Ens4_MI_final}
\end{equation}
Substitution into \eqref{eq:Fano} yields the necessary condition
\begin{equation}
    n \ge \frac{2pd\log \frac{p}{d}}{ \log\Big(1 + \big((d+1)\frac{\tau}{1-\tau}\big)^2\Big) } (1-\delta-o(1)), \label{eq:GaussianBound2}
\end{equation}
where the numerator arises from \eqref{eq:GaussCardinality2}, and we have set $m = d+1$ and $a = \frac{\tau}{1-\tau}$.

\section{Discussion: Average Degree  vs.~Maximal Degree} \label{sec:DISCUSSION}

The question of whether the maximal degree $\dmax$ or average degree $\davg$ dictates the performance of active graphical model selection was raised in \cite{Das16},\footnote{More precisely, \cite{Das16} considers the quantity $\dmaxbar$ defined in Section \ref{sec:related}, but this coincides with $\davg$ for all ensembles considered in this paper, at least up to a multiplicative $1+o(1)$ term.} where it was suggested that it is the latter in the Gaussian case if $\tau$ is bounded away from zero.  Our results are proved by considering restricted ensembles for which $\davg = \dmax(1+o(1))$, and hence it is not immediately clear which is more fundamental.  We proceed by discussing the two for both Ising models and Gaussian models.  

We first remark that the first terms in each of \eqref{eq:Ising_final} and \eqref{eq:Gaussian_final} do not contain $d$, and they were proved by considering an ensemble where every node has degree exactly one.  Moreover, the third term in \eqref{eq:Ising_final} is trivially obtained by counting the number of graphs with maximal degree $d$, without any further restrictions, and it is unclear how to adapt this to gain insight on the role of the average degree.  Hence, to provide a distinction between $\dmax$ and $\davg$, we focus only on the second terms in \eqref{eq:Ising_final} and \eqref{eq:Gaussian_final}.

For the Ising model, the second term in \eqref{eq:Ising_final} was obtained by considering $\lfloor \frac{p}{d+1} \rfloor$ cliques of size $d+1$, and considering graphs obtained by subsequently removing a single edge, \emph{cf.}, Section \ref{sec:IsingPf2}.  In the supplementary material, we describe an analogous ensemble in which these cliques have different sizes, and show that the term $e^{\lambda \dmax}$ still arises in the resulting sample complexity bound.  Intuitively, this is because even if all cliques except the largest are known perfectly and an edge is removed from the largest one, it is still very difficult to identify that edge.   Hence, regarding this exponential term (which is the main feature of the bound), it is $\dmax$ that dictates the performance here.

For the Gaussian model, the second term in \eqref{eq:Gaussian_final} was obtained by considering graphs containing $\lfloor \frac{p}{d+1} \rfloor$ cliques of size $d+1$, \emph{cf.}, Section \ref{sec:GaussianPf2}.  In the supplementary material, we provide a natural extension of this ensemble which instead uses cliques of \emph{differing} sizes $(d_1,\dotsc,d_K)$ such that $\sum_{k=1}^K (d_k+1) = p$.  We make the mild assumption that each of these degrees behaves as $d_k = o(p)$. 

The most straightforward extension of the proof of Theorem \ref{thm:Gaussian} yields a bound of the form $n = \Omega\big( \frac{ p \dmin \log\frac{p}{\dmax} }{ \log(1+\tau\dmax) } \big)$, where $\dmin$ is the minimum degree.  This bound is rarely tight, but it can be improved by a genie argument: Reveal to the decoder all of the smallest cliques, up to a total of $(1-\alpha) p$ nodes for some $\alpha \in (0,1)$.  The decoder is left to estimate the remaining cliques among  $\alpha p$ nodes.  

In the supplementary material, we show that as long as $\alpha$ is bounded away from zero and one, this approach yields a sample complexity lower bound of the form $n = \Omega\big( \frac{ p \dmin^{(\alpha)} \log\frac{p}{\dmax} }{ \log(1+\tau\dmax) } \big)$, where $\dmin^{(\alpha)}$ is the minimum degree among the remaining $\alpha p$ nodes.  If the top $\alpha p$ node degrees in the graph coincide to within a constant factor, then we have $\dmax^{(\alpha)} = \Theta(\davg)$, and we thus match the $O((1+\davg) p \log p)$ upper bound from \cite{Das16} for fixed $\tau$, up to a logarithmic factor.

These observations support the idea proposed in \cite{Das16} that the average degree is the more fundamental quantity in the Gaussian setting with fixed $\tau$.   Note, however, that the assumptions are slightly different, due to the coherence assumption made in \cite{Das16} and the above assumption on the top $\alpha p$ node degrees. 

\section{Conclusion} \label{sec:CONCLUSION}

We have provided lower bounds on active learning for graphical model selection.  Using a variety of restricted graph ensembles, we recovered analogous bounds to those for the passive setting, suggesting that active learning does not help much in the minimax sense for the degree-bounded class $\Gc_{d}$.  Moreover, we identified an ensemble for the Ising model in which the maximal degree remains the crucial quantity, and another ensemble for the Gaussian model in which the average degree is the more important quantity.  We note that our analysis also readily extends to the edge-bounded class $\Gc_{k}$ in which all graphs have at most $k$ edges, analogously to previous works such as \cite{San12,Sca16}.


An important direction for further research is to characterize the gain (if any) that can be achieved by active learning in the case of \emph{random} graphs (e.g., Erd\"os-R\'enyi \cite{Ana12,Ana12a}, power law \cite{Tan13}), in which the maximal and average degrees can differ considerably.  Moreover, it would be of interest to understand the role of active learning when the edges have differing parameters $\{\lambda_{ij}\}$ in the Ising model, or when the values $\tau_{ij} = \frac{|\Theta_{ij}|}{\sqrt{\Theta_{ii}\Theta_{jj}}}$ differ in the Gaussian model.

\vspace*{-1.5ex}
\section*{Acknowledgment}
\vspace*{-1.5ex}

This work was supported in part by the European Commission under Grant ERC Future Proof, SNF 200021-146750 and SNF CRSII2-147633, and by the `EPFL Fellows' programme (Horizon2020 grant 665667).


%

\bibliographystyle{IEEEtran}
\bibliography{../JS_References}

\newpage
\appendix
\onecolumn

{\huge \bf Supplementary Material}

{\Large \bf ``Lower Bounds on Active Learning for Graphical Model Selection'' (Scarlett and Cevher, AISTATS 2017)}

\section{Proof of Lemma \ref{lem:Fano}}

We start with the following form of Fano's inequality \cite[Lemma 1]{Sha14}:
\begin{equation}
    1 \ge \frac{\log|\Tc|}{ I(G; \Xv) } \bigg(1-\delta - \frac{\log 2}{\log|\Tc|}\bigg),
\end{equation}
where $\Xv = (X^{(1)},\dotsc,X^{N})$.  This remains valid in the active learning setting since it only relies on the fact that $G \to \Xv \to \hat{G}$ forms a Markov chain.  Despite this common starting point, we bound the mutual information significantly differently.  Defining $\Xv^{(1,i)} = (X^{(1)},\dotsc,X^{(i)})$, we have\footnote{Here $H$ represents entropy in the discrete case (e.g., Ising), and differential entropy in the continuous case (e.g., Gaussian).}
\begingroup
\allowdisplaybreaks
\begin{align} 
    I(G;\Xv) 
        &= \sum_{i=1}^N I(X^{(i)}; G \,|\, \Xv^{(1, i-1)}) \label{eq:boudMI_1} \\
        &= \sum_{i=1}^N I(X^{(i)}; G \,|\, \Xv^{(1, i-1)},Z^{(i)}) \label{eq:boudMI_2}\\
        &=  \sum_{i=1}^N  \Big( H(X^{(i)} \,|\, \Xv^{(1, i-1)},Z^{(i)}) - H(X^{(i)} \,|\, \Xv^{(1, i-1)},Z^{(i)},G) \Big) \label{eq:boudMI_3} \\
        &=  \sum_{i=1}^N  \Big( H(X^{(i)} \,|\, \Xv^{(1, i-1)},Z^{(i)})  - H(X^{(i)} \,|\, Z^{(i)},G) \Big) \label{eq:boudMI_4} \\
        &\le  \sum_{i=1}^N  \Big( H(X^{(i)} \,|\, Z^{(i)}) - H(X^{(i)} | G,Z^{(i)}) \Big) \label{eq:boudMI_5} \\
        &= \sum_{i=1}^N I(G;X^{(i)}|Z^{(i)}), \label{eq:boudMI_6}
\end{align}
\endgroup
where \eqref{eq:boudMI_1} follows from the chain rule, \eqref{eq:boudMI_2} follows since $Z^{(i)}$ is a function of $\Xv^{(1,i-1)}$ , \eqref{eq:boudMI_4} follows since $X^{(i)}$ is conditionally independent of $\Xv^{(1,i-1)}$ given $(G,Z^{(i)})$, and \eqref{eq:boudMI_5} follows since conditioning reduces entropy.  This completes the proof of \eqref{eq:Fano}.

Conditioned on $Z^{(i)} = z^{(i)}$, the only variables in $X^{(i)}$ conveying information about $G$ are those corresponding to entries where $z^{(i)}$ is one, since the others deterministically equal $\missing$.  By applying the mutual information upper bound of \cite{Sha14} (see the proof of Corollary 2 therein) to the restricted graph $G(z^{(i)})$ with an auxiliary distribution $Q_{(z^{(i)})}$, we obtain that
\begin{equation}
    D(P_{G(z^{(i)})} \| Q_{(z^{(i)})}) \le \epsilon(z^{(i)}), \forall G\in\Tc \implies I(G;X^{(i)}|Z^{(i)}=z^{(i)}) \le \epsilon(z^{(i)}). \label{eq:D_I_init}
\end{equation}
Note that conditioned on $Z^{(i)}=z^{(i)}$, the graph $G$ may no longer be uniform on $\Tc$; the preceding claim remains valid since the proof of \cite[Cor.~2]{Sha14} is for general graph distributions that need not be uniform.

Finally, the inequality in \eqref{eq:mi_bound} follows by averaging both sides of the mutual information bound in \eqref{eq:D_I_init} over $Z^{(i)}$.

\section{Ensemble and Sample Complexity for Comparing the Average Degree and Maximal Degree (Ising model)}

Formalizing the discussion on the Ising model in Section \ref{sec:DISCUSSION}, we introduce the following analog of Ensemble 2, consisting of some number $L$ of variable-size cliques with an edge removed.

\smallskip
\noindent\fbox{
    \parbox{0.95\columnwidth}{
        \textbf{Ensemble2a($m_1,\dotsc,m_L$) [Variable-size edge-removed cliques ensemble]:}
        \begin{itemize}
            \item Form $L$ arbitrary node-disjoint cliques of sizes $(m_1,\dotsc,m_L)$, to obtain a base graph $G'$.
            \item Each graph in $\Tc$ is obtained by removing a single edge from each of the $L$ cliques.
        \end{itemize}
    }
} \bigskip

We have the following.

\begin{lem}
    Fix the integers $L$ and $(m_1,\dotsc,m_L)$ with $\sum_{j=1}^L m_j = p$, and let $G$ be drawn uniformly from \emph{Ensemble2a}($m_1,\dotsc,m_L$).  Then in order to achieve $\pebar \le \delta$, it is necessary that
    \begin{equation}
        n \ge \frac{e^{\lambda\dmax} \log\big( \dmax(\dmax+1) \big)}{2 \lambda \dmax e^{\lambda}} \Big(1 - \delta - \frac{\log 2}{\log(\dmax+1)}\Big),
    \end{equation}
    where $\dmax = \max_{j=1,\dotsc,L} m_j - 1$.
\end{lem}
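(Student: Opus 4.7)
The plan is to adapt the proof of the second bound for the Ising model (Section \ref{sec:IsingPf2}) by augmenting it with a genie argument that isolates the largest clique. Since the $L$ cliques are node-disjoint, the joint distribution $P_G$ factorizes as a product across cliques, so observations of nodes in one clique carry no information about the edge removed from another clique. This structural independence is what makes a clique-wise genie argument clean in the active setting.

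First, I would apply a genie that reveals to the decoder the identity of the removed edge in every clique except the one of maximum size $\dmax+1$. Revealing side information can only decrease the error probability, so any lower bound on the genie-aided problem is a valid lower bound on the original. Conditioned on the genie, the remaining task is to identify a single removed edge from the largest clique, and since the original prior was uniform, the posterior on $G$ is uniform over a reduced ensemble $\Tc'$ of cardinality $|\Tc'| = \binom{\dmax+1}{2} = \Theta(\dmax(\dmax+1))$. I would then apply Lemma~\ref{lem:Fano} to this reduced active learning problem; the Markov chain $G \to \Xv \to \hat G$ is preserved under the genie, and adaptive queries that exploit the revealed structure can only help the decoder.

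Next, to bound the per-round mutual information, I would choose the reference distribution $Q = P_{G''}$, where $G''$ matches the genie-revealed edges in the smaller $L-1$ cliques but leaves the largest clique fully connected. For any $G \in \Tc'$, the distributions $P_G$ and $P_{G''}$ agree on the smaller cliques and differ by exactly a single missing edge in the largest clique, so $D(P_G \| P_{G''})$ collapses to the single-edge divergence of \cite[Lemma~2]{San12} applied to a clique of degree $\dmax$, giving $D(P_G \| P_{G''}) \le \frac{4\lambda \dmax e^{\lambda}}{e^{\lambda \dmax}}$. As in Section \ref{sec:IsingPf2}, the restriction to an observed subset $G(z)$ can only decrease the divergence, so $D(P_{G(z)} \| P_{G''(z)})$ obeys the same upper bound, and \eqref{eq:mi_bound} yields $I(G; X^{(i)} \,|\, Z^{(i)}) \le \frac{4\lambda \dmax e^{\lambda}}{e^{\lambda \dmax}}$.

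Finally, since without loss of optimality each round observes at least two nodes, there are at most $n/2$ rounds, so $\sum_i I(G; X^{(i)} \,|\, Z^{(i)}) \le \frac{2n\lambda \dmax e^{\lambda}}{e^{\lambda \dmax}}$. Substituting into \eqref{eq:Fano} with $\log|\Tc'| = \log\binom{\dmax+1}{2}$ yields the claimed bound (up to the standard manipulation of the $\log 2$ correction term). The main obstacle, and the only nontrivial piece, is verifying that the genie argument integrates cleanly with active learning: the adaptivity of $Z^{(i)}$ must be allowed to depend on the revealed side information, and one needs the MI bound in Lemma~\ref{lem:Fano} to apply to the conditional problem where the posterior of $G$ is uniform on $\Tc'$ --- which it is, since the original ensemble was uniform and the genie reveals information consistent with every element of $\Tc'$.
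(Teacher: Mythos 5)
Your proposal is correct and follows essentially the same route as the paper: a genie reveals the removed edges in all cliques except the largest, reducing the problem to Ensemble2($\dmax+1$) on $\dmax+1$ nodes, after which the bound of Section \ref{sec:IsingPf2} (i.e., \eqref{eq:n_final_Ens2}) applies verbatim. The paper states this reduction in one line, whereas you additionally spell out the supporting details (product structure across disjoint cliques, the choice of reference distribution $Q$, and the compatibility of the genie with adaptive queries), all of which are consistent with the intended argument.
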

\begin{proof}
    We consider a genie argument, in which the decoder is informed of all of the removed edges from the cliques, except for the largest, whose size is $\dmax + 1$.  In this case, the analysis reduces to that of Ensemble2($\dmax + 1$) on a graph with $p = \dmax + 1$ nodes.  The result now follows immediately from \eqref{eq:n_final_Ens2}, and recalling that the $o(1)$ remainder term therein is equal to $\frac{\log 2}{|\Tc|}$ from \eqref{eq:Fano}.
\end{proof}

\section{Ensemble and Sample Complexity for Comparing the Average Degree and Maximal Degree (Gaussian model)}

Formalizing the discussion on the Gaussian model in Section \ref{sec:DISCUSSION}, we introduce the following ensemble, consisting of some number $L$ of variable-size cliques.

\smallskip
\noindent\fbox{
    \parbox{0.95\columnwidth}{
        \textbf{Ensemble4a($m_1,\dotsc,m_L$) [Disjoint variable-size cliques ensemble]:}
        \begin{itemize}
            \item Each graph in $\Tc$ consists of $L$ disjoint cliques of sizes $(m_1,\dotsc,m_L)$ nodes that may otherwise be arbitrary.
        \end{itemize}
    }
} \bigskip

We have the following.

\begin{lem}
    Fix the integers $L$ and $(m_1,\dotsc,m_L)$ with $\sum_{j=1}^L m_j = p$ and $\max_{j=1,\dotsc,L} m_j = o(p)$, and let $G$ be drawn uniformly from \emph{Ensemble4a}($m_1,\dotsc,m_L$).  Then for any $\alpha \in (0,1)$ (not depending on $p$), in order to achieve $\pebar \le \delta$, it is necessary that
    \begin{equation}
        n \ge\frac{2\alpha p \dmin^{(\alpha)}\log \frac{p}{\dmax}}{ \log\Big(1 + \big((\dmax+1)\frac{\tau}{1-\tau}\big)^2\Big) } \big(1 - \delta - o(1)\big),
    \end{equation}
    where $\dmax = \max_{j=1,\dotsc,L} m_j - 1$, and $\dmin^{(\alpha)}$ is the minimum degree among the $\alpha p$ nodes having the largest degree.\footnote{This is the same for all graphs in the ensemble, so here $\dmin^{(\alpha)}$ is well-defined.}
\end{lem}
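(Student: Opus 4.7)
The plan is to emulate the proof of Theorem~\ref{thm:Gaussian} for Ensemble~4 (Section~\ref{sec:GaussianPf2}), combined with a genie argument. Specifically, I would grant the decoder free knowledge of the clique structure containing the $(1-\alpha)p$ lowest-degree nodes. Since each clique sits either entirely inside or entirely outside the top-$\alpha p$ node set, this leaves the decoder with the task of recovering the partition of the top $\alpha p$ nodes into $K'$ cliques of known sizes $\mu_k\in[\dmin^{(\alpha)}+1,\dmax+1]$, and I would apply Lemma~\ref{lem:Fano} to the uniform distribution over this reduced partition ensemble $\Tc$. Any lower bound on this easier problem is a fortiori a lower bound for the original.

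For the mutual-information term, I would adapt the divergence calculation in \eqref{eq:div_guass1}--\eqref{eq:Ens4_MI_final} to variable clique sizes. With $a=\tau/(1-\tau)$, the KL decomposes across cliques as $D(P_{G(z)}\|Q_{(z)})=\tfrac{1}{2}\sum_j\phi(\mu_j^{(z)},m_j)$, where $\phi(\mu,m)=-\log(1-\mu a/(1+ma))-\mu a/(1+ma)$ and $\mu_j^{(z)}\le m_j$ counts the observed nodes in clique $j$. Convexity of $\phi$ in $\mu$ (for fixed $m$) reduces the worst observation pattern to an all-or-nothing assignment $\mu_j^{(z)}\in\{0,m_j\}$ per clique, so that observations concentrate in fully observed cliques. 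Combining $\phi(m,m)\le\tfrac{1}{2}\log(1+(ma)^2)\le\tfrac{1}{2}\log(1+((\dmax+1)a)^2)$ with the constraint that at most $n(z)/(\dmin^{(\alpha)}+1)$ cliques can be fully observed under a budget of $n(z)$ nodes yields $\sum_{i=1}^N I(G;X^{(i)}|Z^{(i)})\le\frac{n}{4(\dmin^{(\alpha)}+1)}\log(1+((\dmax+1)a)^2)$, in direct analogy with \eqref{eq:Ens4_MI_final}.

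For the cardinality, I would exploit the combinatorial fact that $|\Tc|$ is minimised (over size profiles $\{\mu_k\}$ satisfying $\mu_k\le\dmax+1$ and $\sum_k\mu_k=\alpha p$) by the uniform profile $\mu_k=\dmax+1$ for all $k$, as can be shown by an exchange argument (merging two smaller cliques into a larger admissible one can only decrease $|\Tc|$, since the increase in the $\prod_k \mu_k!$ denominator dominates any decrease in the $\prod_j s_j!$ symmetry factor). Applying the Ensemble~4 counting of \eqref{eq:GaussCardinality2} to this worst case and using $\dmax=o(p)$ gives $\log|\Tc|\ge\tfrac{\alpha p}{2}\log(p/\dmax)(1-o(1))$. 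Substituting the cardinality and mutual-information bounds into \eqref{eq:Fano}, and then using $\dmin^{(\alpha)}+1\ge\dmin^{(\alpha)}$ to simplify the prefactor, produces the lemma's lower bound on $n$.

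The main obstacle I anticipate is rigorising the cardinality step: while the exchange argument is intuitive, the admissibility constraint $\mu_k\le\dmax+1$ may prevent a direct single-step merge in some configurations (when $\mu_k+\mu_{k'}>\dmax+1$), so the proof may need a more careful extremisation, e.g., a sequence of admissible swap-and-merge operations with degenerate size profiles (such as a single clique attaining the minimum size) handled separately. The mutual-information step, by contrast, follows almost directly from the Ensemble~4 analysis once the convexity of $\phi$ in its first argument is observed.
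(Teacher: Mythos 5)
Your overall architecture matches the paper's: a genie revealing the small cliques so that the problem reduces to Ensemble 4a on the top $\alpha p$ nodes, Fano via Lemma \ref{lem:Fano}, a per-clique KL decomposition, and an extremization showing that the worst observation pattern concentrates on whole large cliques. Two of your steps, however, deviate in ways that matter.

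First, the cardinality bound. You propose to show that $|\Tc|$ is minimized over admissible size profiles by the uniform profile $\mu_k=\dmax+1$ and then invoke the Ensemble 4 count. This detour is both unnecessary and unsound as stated. Unnecessary, because the size profile is fixed by hypothesis; the paper lower-bounds the product of binomials directly: sorting cliques by decreasing size and letting $L''$ be the largest index with $\sum_{j\le L''}m_j\le\alpha p/2$, each of the first $L''$ binomials is at least ${\lfloor \alpha p/2\rfloor \choose m_j}$, whence $\log|\Tc|\ge\sum_{j\le L''}m_j\log\frac{\alpha p}{2m_j}(1+o(1))\ge\frac{\alpha p}{2}\log\frac{p}{\dmax}(1+o(1))$. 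Unsound as stated, because the exchange argument is not clearly monotone: the symmetry factor $\prod_j s_j!$ can exactly offset the growth of $\prod_k\mu_k!$ (on six nodes the profiles $(2,2,2)$ and $(4,2)$ both yield $15$ partitions), the target profile need not exist when $(\dmax+1)$ does not divide $\alpha p$, and, as you concede, admissibility can block merges. You correctly flag this as the weak point; the fix is to drop the extremization and count directly.

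Second, a quieter but more dangerous issue in the mutual-information step. Maximizing your convex-in-$\mu$ objective over the polytope $\{0\le\mu_j\le m_j,\ \sum_j\mu_j=n(z)\}$ does land at an extreme point, but such a point generally has \emph{one fractional coordinate}, not a pure all-or-nothing assignment. If you bound that partial clique's contribution by a full-clique term, you pick up an additive $\tfrac{1}{4}\log\big(1+(\mmax a)^2\big)$ \emph{per round}; summed over up to $\Theta(n)$ rounds this contributes a term of order $n\log(1+(\mmax a)^2)$, which swamps the intended $\frac{n}{4\mmin^{(\alpha)}}\log(1+(\mmax a)^2)$ and destroys the $\dmin^{(\alpha)}$ factor in the final bound. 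The paper avoids this by writing each clique's contribution as $\beta_jf(\beta_j)$ with $f$ increasing, so a clique with $\mu$ of its $m$ nodes observed contributes at most a fraction $\mu/m$ of a full clique's contribution, and the partial clique is absorbed exactly into the $n(z)/\mmin^{(\alpha)}$ budget. Your convexity framing can accommodate the same fix, but as written the accounting does not close.
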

\begin{proof}
    We again consider a genie argument, in which the decoder is informed of all of the cliques except the largest ones, such that these remaining cliques form a total of $\alpha p$ nodes.\footnote{Since $m_j = o(p)$ for all $j$, we can safely ignore rounding and assume that the total is exactly $\alpha p$.}  Assuming without loss of generality that the $m_j$ are in decreasing order, the analysis reduces to the study of Ensemble4a on a graph with $\alpha p$ nodes, and cliques of size $(m_1,\dotsc,m_{L'})$, where $L' \le L$ is defined such that $\sum_{j=1}^{L'} m_j = \alpha p$.
    
    For this reduced ensemble, the total number of graphs is ${\alpha p \choose m_1}{\alpha p-m_1 \choose m_2}\dotsc{\alpha p - \sum_{j=1}^{L'-2} m_j \choose m_{L' - 1} }{m_{L'} \choose m_{L'}}$.  We let $L''$ be the largest integer such that $\sum_{j=1}^{L''} m_j \le \alpha p / 2$, and write
    \begin{align}
        \log |\Tc| 
            &\ge \sum_{j=1}^{L''} \log{ \lfloor \alpha p/2 \rfloor \choose m_j} \label{eq:Cardinality_a1}\\
            &= \sum_{j=1}^{L''}  \Big( m_j \log \frac{\alpha p}{2 m_j} \Big) (1+o(1)) \label{eq:Cardinality_a2} \\
            &\ge \Big( \frac{\alpha p}{2} \log \frac{\alpha p}{2 m_1} \Big) (1+o(1)) \label{eq:Cardinality_a3} \\
            &= \Big( \frac{\alpha p}{2} \log \frac{p}{\dmax} \Big) (1+o(1)), \label{eq:Cardinality_a4}
    \end{align}
    where \eqref{eq:Cardinality_a2} follows since $m_j = o(\alpha p)$ by assumption, \eqref{eq:Cardinality_a3} follows by first applying $m_j \le m_1$ inside the logarithm and then applying the definition of $L''$, and \eqref{eq:Cardinality_a4} follows since $m_1 = \dmax + 1$ by definition.
    
    
    We now follow the analysis of Section \ref{sec:GaussianPf2}, and note that if a single measurement consists of $n(z)$ nodes indexed by $z \in \{0,1\}^p$, and if this corresponds to observing $\mtil_j$ nodes from each clique $j=1,\dotsc,L'$, then we have the following analog of \eqref{eq:div_guass1}:
    \begin{equation}
        D(P_{G(z)} \| Q_{(z)} ) = \frac{1}{2}\Bigg(\sum_{j=1}^{L'} -\log\bigg(1 - \frac{\mtil_j a}{1 + m_j a}\bigg) - \frac{\mtil_j a}{1 + m_j a}\Bigg), \label{eq:div_guass1a}
    \end{equation}
    where $Q_{(z)}$ and $a$ are defined in Section \ref{sec:GaussianPf2}.
    
     Defining $\beta_j = \frac{\mtil_j a}{1+m_j a}$ and $f(\beta) = \frac{-\log(1-\beta) - \beta}{\beta}$, we can write the right-hand side of \eqref{eq:div_guass1a} as
    \begin{equation}
        \sum_{j=1}^{L'} \beta_j f(\beta_j), \label{eq:alpha_sum_a}
    \end{equation}
    As a result, we consider the maximization of \eqref{eq:alpha_sum} subject to $0 \le \beta_j \le \frac{m_ja}{1+m_ja}$ and $\sum_{j} \beta_j (1+m_j a) = n(z)a$, where these constraints follow immediately from $0 \le \mtil_j \le m$ and $\sum_{j} \mtil_j = n(z)$. 
    
    While the optimal choices of $\{\beta_j\}$ for the preceding maximization problem are unclear, we observe that the final objective value can only increase if we relax the second constraint to $\sum_{j} \beta_j (1+\mmin^{(\alpha)} a) \le n(z)a$, where $\mmin^{(\alpha)} = m_{L'} = \dmin^{(\alpha)} + 1$.  With this modification, we find similarly to \eqref{eq:div_guass1} that the maximum is achieved by setting $\beta_j$ to its maximum value $\frac{m_j a}{1 + m_j a}$ (i.e., $\mtil_j = m_j$) for as many of the largest cliques as is permitted by the constraint $\sum_{j} \beta_j (1+\mmin^{(\alpha)} a) \le n(z)a$.  Since each clique under consideration has at least $\mmin^{(\alpha)}$ nodes, this amounts to at most $\frac{n(z)}{\mmin^{(\alpha)}}$ cliques.  Moreover, since $\beta f(\beta)$ is increasing in $\beta$, the corresponding values of $\beta_j f(\beta_j)$ are upper bounded by $\frac{\mmax a}{1 + \mmax a}  f\big( \frac{\mmax a}{1 + \mmax a} \big)$.
    
    Combining these observations, we obtain the following analog of \eqref{eq:sum_bound}:
    \begin{align}
        \sum_{j=1}^{L'} \beta_j f(\beta_j) \le \frac{n(z)}{\mmin^{(\alpha)}} \frac{\mmax a}{1+\mmax a} f\Big( \frac{\mmax a}{1+\mmax a} \Big), \label{eq:sum_bound_a}
    \end{align}
    and accordingly, using the same subsequent steps, we obtain the following analog of \eqref{eq:Ens4_MI_final}:
    \begin{equation}
        \sum_{i=1}^N I(G; X^{(i)} | Z^{(i)}) \le \frac{n}{4\mmin^{(\alpha)}} \log\big(1 + (\mmax a)^2\big). \label{eq:Ens4_MI_final_a}
    \end{equation}
    The proof is concluded using \eqref{eq:Fano} along with the cardinality bound in \eqref{eq:Cardinality_a4}, and recalling that $\mmin^{(\alpha)} = \dmin^{(\alpha)} + 1$, $\mmax = \dmax + 1$, and $a = \frac{\tau}{1-\tau}$.
    
\end{proof}

\end{document}